\newcommand{\TL}{\delta}
\newcommand{\OL}{\omega}
\newcommand{\thpt}{\tau}
\newcommand{\smthness}{\sigma}
\newcommand{\xor}{\oplus}
\newcommand{\xmark}{\ding{55}}
\newtheorem{defn}{Definition}
\newtheorem{clm}{Claim}
\newtheorem{conj}{Conjecture}
\newtheorem{prop}{Proposition}
\title{Throughput-Smoothness Trade-offs in\\Multicasting of an Ordered Packet Stream}
\author{
\authorblockN{Gauri Joshi}
\authorblockA{EECS Dept.,
MIT\\
Cambridge, MA 02139, USA \\
Email: gauri@mit.edu}
\and
\authorblockN{Yuval Kochman}
\authorblockA{School of CSE, HUJI \\
Jerusalem, Israel \\
Email: yuvalko@cs.huji.ac.il}
\and
\authorblockN{Gregory W. Wornell}
\authorblockA{EECS Dept.,
MIT\\
Cambridge, MA 02139, USA \\
Email: gww@mit.edu}
}
\begin{document}
\maketitle
\begin{abstract}
An increasing number of streaming applications need packets to be strictly \emph{in-order} at the receiver. This paper provides a framework for analyzing in-order packet delivery in such applications. We consider the problem of multicasting an ordered stream of packets to two users over independent erasure channels with instantaneous feedback to the source. Depending upon the channel erasures, a packet which is in-order for one user, may be redundant for the other. Thus there is an inter-dependence between throughput and the smoothness of in-order packet delivery to the two users. We use a Markov chain model of packet decoding to analyze these throughput-smoothness trade-offs of the users, and propose coding schemes that can span different points on each trade-off. 

\end{abstract}

\section{Introduction}
\label{sec:intro}

There has been a rapid increase in streaming applications in both wired and wireless communication in recent years. Unlike traditional file transfer where only the total delay until the end of file transfer matters, streaming applications impose delay and order constraints on each individual packet in the file. Streaming includes audio/video applications VoIP, NetFlix, YouTube which play packets in-order. Although these applications need to play packets in-order, they can drop packets that experience large transmission delays. However other cloud-based applications such as remote desktop, Dropbox and Google Drive do not allow packet dropping, because packets represent instructions that need to be executed in-order. 

In \cite{gauri_isit_paper, gauri_infocom} we considered the problem of point-to-point streaming where the receiver applications require packets in-order. In several applications such as live video broadcasting, many users are accessing the content simultaneously. In this work we consider a multicast scenario where the source wants to ensure fast in-order packet delivery of a stream of packets to multiple users, while using the available bandwidth efficiently. 

The use of network coding in multicast packet transmission has been studied in \cite{fragouli_broadcast, delay_control_online_nw_coding, fu_sadeghi_medard, sundar_sadeghi_medard, shah_three_receiver}. The authors in \cite{fragouli_broadcast} use as a delay metric the number of coded packets that are successfully received, but do not allow immediate decoding of a source packet. For two users, the paper shows that a greedy coding scheme is throughput-optimal and guarantees immediate decoding in every slot. However, optimality of this scheme has not been proved for three or more users. In \cite{delay_control_online_nw_coding}, the authors analyze decoding delay with the greedy coding scheme in the two user case. However, both these delay metrics do not capture the aspect of in-order packet delivery. 

In-order packet delivery is considered in \cite{fu_sadeghi_medard, sundar_sadeghi_medard, shah_three_receiver}. These works consider that packets are generated by a Poisson process and are greedily added to all future coded combinations. 
In this work we provide a more general framework where the source can use feedback about past erasures to decide how many and which packets to add to the coded combinations, instead of just greedy coding over all generated packets. 

The main contribution of this work is to analyze how the priority given by the source to each user affects the in-order delivery of packets. We analyze the trade-off between the throughput and the smoothness of packet delivery for the two user case. In Section~\ref{sec:fixed_primary} we find the best coding scheme for a user that is piggybacking on a primary user that is always given higher priority. In Section~\ref{sec:greedy_scheme} we find the coding scheme that gives the best smoothness in packet delivery while ensuring throughput optimality to both users. In Section~\ref{sec:general_trade-off} we propose general coding schemes that can be used to tune the operating point on the throughput-smoothness trade-off of each user.

\section{Preliminaries} 
\label{sec:prelim}
\subsection{System Model}
\label{subsec:system_model}
Consider a source that has to multicast an infinite stream of packets $s_n$, $n \in \mathbb{N}$ of equal size to $K$ users $U_1, U_2, \cdots, U_K$. Time is divided into fixed length slots. In each slot the source transmits one coded linear combination of the source packets, with coefficients chosen from a large enough field to ensure independence of the coded combinations. 

We consider an i.i.d.\ erasure channel to each user such that every transmitted packet is received successfully at user $U_i$ with probability $p_i$, and otherwise received in error and discarded. The erasure events are independent across the users. The theoretical analysis presented in this paper focuses on the two user case. For simplicity of notation in this case, let $a \triangleq p_1 p_2$, $b \triangleq p_1 (1-p_2)$, $c \triangleq(1-p_1) p_2$ and $d \triangleq (1-p_1) (1-p_2)$, the probabilities of the four possible erasure patterns.\footnote{By considering other values of probabilities $a, b, c$ and $d=1-a-b-c$, our analysis can be extended to erasure events that are correlated across users but i.i.d\ across time slots.} We consider instantaneous and error-free feedback such that before transmission in slot $n$, the source knows about all erasures until slot $n-1$. 

The receiver-end application at each user requires packets strictly in order. Packets decoded out-of-order are buffered until the missing packets are decoded. Assume that the buffer is large enough to store all the out-of-order packets. Every time the earliest missing packet is decoded, a burst of in-order decoded packets is delivered to the application. For example, suppose that $s_1$ has been delivered and $s_3$, $s_4$, $s_6$ are decoded and waiting in the buffer. If $s_2$ is decoded in the next slot, then $s_2$, $s_3$ and $s_4$ are delivered to the application. 

%

\subsection{Performance Metrics}
\label{subsec:perf_metrics}

Ideally every user should get its next in-order packet, or its ``required" packet in every successful slot. The notion of required packets is formally defined as follows.
\begin{defn}[Required packet]
\label{defn:reqd_pkt}
The required packet of $U_i$ is its earliest undecoded packet after $n$ slots. Its index is denoted by $r_i(n)$, or $r_i$ where it is known without specifying $n$. 
\end{defn}
For example, if packets $s_1$, $s_3$ and $s_4$ have been decoded at user $U_i$, its required packet $s_{r_i}$ is $s_2$. 
 
Since the users experience independent channel erasures, the required packet of one user may be already decoded, and hence redundant for another user. Thus, there is a trade-off between the rate and the smoothness of packet delivery. We analyze this trade-off using the following performance metrics.

\begin{defn}[Throughput]
\label{defn:thpt}
The throughput $\thpt_i$ is the rate of in-order packet delivery to user $U_i$ and is given by
\begin{align}
\thpt_i = \lim_{n \rightarrow \infty} \frac{r_i(n)}{n} \quad \text{ in probability.}
\end{align}
\end{defn}

\begin{defn}[Smoothness Index]
\label{defn:smoothness}
The smoothness index $\smthness_i$ is the probability of a burst of in-order packets being delivered in a given slot. It is given by,
\begin{align}
\smthness_i = \lim_{n \rightarrow \infty} \frac{\sum_{k=1}^{n} \mathbbm{1}(r_i(k) > r_i(k-1)) }{n} \quad \text{ in probability,}
\end{align}
where $\mathbbm{1}(E)$ is the indicator function that is $1$ when event $E$ occurs and $0$ otherwise. 
\end{defn}

A higher $\thpt_i$ and $\smthness_i$ means a faster and smoother packet delivery respectively. The best possible trade-off is $(\thpt_i, \smthness_i) = (p_i, p_i)$. For the single user case, it can be achieved the simple Automatic-repeat-request (ARQ) scheme where the source retransmits the earliest undecoded packet until it is decoded. In this paper we aim to design coding strategies that maximize throughput and smoothness index for the two user case. 

In our analysis, it is convenient to express $\thpt_i$ and $\smthness_i$ in terms of the following intermediate quantities. 

\begin{defn}[Throughput Loss]
\label{defn:thpt_loss}
In $n$ slots, let $X_n$ be the number of unerased slots for $U_i$, and let $Y_n$ be the number of times the decoder at $U_i$ receives a combination of already decoded packets. Then its throughput loss $\TL_i$ is
\begin{align}
\TL_i = \lim_{n\rightarrow \infty} \frac{Y_n}{ X_n} \quad \text{ in probability}.
\end{align} 
\end{defn}
\begin{defn}[Order Loss]
\label{defn:order_loss}
In $n$ slots, let $Z_n$ be the number of times the decoder at $U_i$ receives an innovative combination, but cannot decode $s_{r_i}$. Then its order loss $\OL_i$ is
\begin{align}
\OL_i = \lim_{n\rightarrow \infty} \frac{Z_n}{ X_n} \quad \text{ in probability}.
\end{align} 
\end{defn}


The throughput $\thpt_i$ and smoothness index $\smthness_i$ of user $U_i$ can be expressed in terms of $\TL_i$ and $\OL_i$ as follows
\begin{align}
\thpt_i &= p_i (1-\TL_i) \label{eqn:lin_trans_1},\\
\smthness_i &= p_i (1-\TL_i-\OL_i) \label{eqn:lin_trans_2} .
\end{align}
In \eqref{eqn:lin_trans_1}, $p_i(1-\TL_i)$ is the fraction of slots in which the decoder receives an innovative coded combination. Since all packets are eventually delivered to the user, this is equal to $\thpt_i$. In \eqref{eqn:lin_trans_2}, $p_i(1-\TL_i-\OL_i)$ is the fraction of slots in which the required packet of $U_i$ is decoded. Since a burst of in-order packets is delivered to the user whenever this happens, this is equal to $\smthness_i$. Thus, the best achievable throughput-smoothness trade-off $(\thpt_i , \smthness_i) = (p_i, p_i)$ is equivalent to $(\TL_i, \OL_i) = (0,0)$. 
%
%
%
%
\subsection{Structure of Good Codes}

We now present code structures that maximize throughput and smoothness index of the users. 
%
\begin{clm}[Include only Required Packets]
\label{clm:reqd_pkts_only}
In a given slot, it is sufficient for the source to transmit a combination of packets $s_{r_i}$ for $i \in \mathcal{I}$ where $\mathcal{I}$ is some subset of $\{1,2, \cdots K\}$. 
\end{clm}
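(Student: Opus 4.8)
The plan is to show that any coding scheme can be converted, without decreasing $\thpt_i$ or $\smthness_i$ for any user, into one that transmits in every slot only a linear combination of the current required packets $s_{r_i}$. Consider an arbitrary transmitted combination $c=\sum_j \alpha_j s_j$; I will strip it down by two reductions. The first is routine: if some $s_j$ appearing in $c$ has already been decoded by \emph{every} user, then each user can locally form and subtract $\alpha_j s_j$, so $c$ and $c-\alpha_j s_j$ carry identical information to everyone, and neither innovativeness nor the ability to decode any $s_{r_i}$ is affected. Hence we may assume every packet in $c$ is undecoded by at least one user; and since any index strictly below $r_i$ is decoded by $U_i$, if $s_j$ is undecoded by $U_i$ and $j\neq r_i$ then $j>r_i$, i.e.\ $s_j$ is a ``future'' packet for $U_i$.

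The crux is the second reduction, which removes every $s_j$ whose index equals no $r_i$. For each user $U_i$ that has not yet decoded $s_j$ it is a future packet, so its presence in $c$ only introduces an additional unknown alongside $s_{r_i}$; it can never let $U_i$ decode $s_{r_i}$, so dropping it cannot hurt $U_i$'s order loss $\OL_i$ or smoothness $\smthness_i$. For a user that has already decoded $s_j$, its presence is neutral by the first-reduction argument. So the sole possible effect of $s_j$ is to keep $c$ innovative for a user who would otherwise take a throughput loss, i.e.\ to contribute one unit of rank toward a future packet. Since packets are never dropped and there is no deadline, that unit of rank can instead be supplied in a later slot, so deferring it leaves the limiting throughput $\thpt_i=p_i(1-\TL_i)$ unchanged; and to keep $c$ innovative \emph{in the current slot} one may simply reinsert an appropriate current required packet. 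Iterating, $c$ is replaced by a combination supported on $\{s_{r_i}:i\in\mathcal I\}$ for some $\mathcal I\subseteq\{1,\dots,K\}$.

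The main obstacle I foresee is showing these per-slot edits have no harmful downstream effect: advancing a user's in-order delivery, or swapping a buffered future packet for progress on its required packet, changes the state seen in later slots, and the modified and original decoded sets are in general not nested. I would settle this with a coupling on a common erasure realization, tracking for each user both the length of its delivered in-order prefix and the total number of packets it has decoded, and maintaining the invariant that the modified scheme is never behind the original in either quantity. Combined with the monotone fact that a user that is never behind in in-order progress nor in total decoded packets has weakly larger $\thpt_i$ and $\smthness_i$, this shows the modified, required-packets-only scheme is at least as good for every user, which is the claim.
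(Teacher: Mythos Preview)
Your two reductions---first strip packets already decoded by every user, then replace any remaining non-required packet by a suitable $s_{r_i}$---are exactly the paper's two-case argument, so the core approach matches. The paper's own proof is in fact shorter and less careful than yours: it simply asserts that substituting some $s_{r_i}$ for $s_c$ ``will allow more users to decode their required packets'' and hence lowers order loss, without separately addressing throughput or the downstream state changes you flag. Your proposed sample-path coupling is a sensible way to make the argument rigorous, though the paper does not attempt it; just note that the invariant ``never behind in total decoded packets'' may need to be weakened, since replacing a future packet by a required one can change \emph{which} packets a user holds rather than increase the count, so the cleaner invariant to carry is that the modified scheme's in-order prefix is always at least as long and its set of decoded packets with index below the original scheme's $r_i$ is always a superset.
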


\begin{proof}[Proof]
Consider a candidate packet $s_c$ where $c \neq r_i$ for any $ 1 \leq i \leq K$. If $c < r_i$ for all $i$, then $s_c$ has been decoded by all users, and it need not be included in the combination. For all other values of $c$, there exists a required packet $s_{r_i}$ for some $i \in \{1,2, \cdots K\}$ that, if included instead of $s_c$, will allow more users to decode their required packets. Hence, including that packet instead of $s_c$ gives a lower order loss.

\end{proof}
\begin{clm}[Include only Decodable Packets]
\label{clm:inst_dec_only}
If a coded combination already includes packets $s_{r_i}$ with $ i \in \mathcal{I}$, and $U_j$, $j \notin I$ has not decoded all $s_{r_i}$ for $ i \in \mathcal{I}$, then a scheme that does not include $s_{r_j}$ in the combination gives a better throughput-smoothness trade-off than a scheme that does.
\end{clm}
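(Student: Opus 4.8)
The plan is to argue by a coupling/exchange argument comparing two schemes that agree in every slot except the one under consideration. Fix a slot $n$ and suppose scheme $\mathcal{A}$ transmits the combination $\sum_{i \in \mathcal{I}} s_{r_i} + s_{r_j}$ while scheme $\mathcal{B}$ transmits only $\sum_{i \in \mathcal{I}} s_{r_i}$, and both schemes make identical choices in all other slots (and on all future slots as a function of the realized feedback). The key observation, which I would state first, is that $s_{r_j}$ is ``fresh'' for $U_j$ in the sense that no user except $U_j$ has $s_{r_j}$ as a missing packet at this time (it is required only for $U_j$; for users in $\mathcal{I}$ it has already been decoded since their required index exceeds... wait, not necessarily --- I would first need to note that by Claim~\ref{clm:reqd_pkts_only} the only packets ever in play are required packets, and if $s_{r_j}$ were required for some $i\in\mathcal{I}$ we would have $r_i=r_j$ and $U_j$ could be merged into $\mathcal{I}$; so WLOG $s_{r_j}$ is genuinely needed by $U_j$ alone among the users receiving this combination).

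Next I would do the slot-by-slot case analysis on the erasure pattern in slot $n$. For every user other than $U_j$: if that user is $U_i$ with $i \in \mathcal{I}$, then since $s_{r_j}$ is already decoded at $U_i$, the presence or absence of $s_{r_j}$ in the combination does not affect whether $U_i$ can decode $s_{r_i}$ --- the two schemes deliver the identical outcome at $U_i$. For any user $U_\ell$ with $\ell \notin \mathcal{I}\cup\{j\}$, the hypothesis that $U_\ell$ (read as the generic ``$U_j$, $j\notin I$'' in the statement, but the argument must cover all such users simultaneously) has not decoded all of $\{s_{r_i}: i\in\mathcal{I}\}$ means $U_\ell$ cannot use either transmission to decode anything new in this slot regardless of whether $s_{r_j}$ is present, so again the outcome is unchanged. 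Finally, for $U_j$ itself: under scheme $\mathcal{B}$, the received combination $\sum_{i\in\mathcal{I}} s_{r_i}$ is a combination of packets $U_j$ has not yet decoded that does not include $s_{r_j}$; since it contains at least one undecoded packet other than $s_{r_j}$, it is either redundant (if $U_j$ already has all of those packets) or innovative-but-not-order-restoring for $U_j$. Under scheme $\mathcal{A}$, the same combination plus $s_{r_j}$ is received; in the best case $U_j$ can now peel off $s_{r_j}$ (if it already had all the $s_{r_i}$, but then $\mathcal{B}$'s transmission was pure redundancy and $\mathcal{A}$'s was order-restoring --- this is the one slot where $\mathcal{A}$ looks better), but in that case the coded symbol was wasted for $U_j$ under $\mathcal{B}$. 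The crucial point is to show this local gain for $\mathcal{A}$ is dominated by a downstream loss: $U_j$ now has $s_{r_j}$ ``stranded'' among packets it cannot yet use, identical in structure to what happens under $\mathcal{B}$ one step later, and one can set up an injection from $\mathcal{A}$'s future order-loss events into $\mathcal{B}$'s.

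Concretely, the engine of the proof is a state-coupling: I would show that after slot $n$, the joint decoding state of all users under $\mathcal{B}$ is a ``relabeling-dominant'' version of the state under $\mathcal{A}$ --- specifically $U_j$'s set of decoded packets under $\mathcal{B}$ is a subset of (or equals) that under $\mathcal{A}$, missing at most $s_{r_j}$, and all other users' states are identical. Then I would run both chains forward under the same future erasures and the same scheme decisions, maintaining the invariant that the only discrepancy is $U_j$ possibly being ``one required packet behind.'' Each time $\mathcal{A}$'s $U_j$ reaches an order-loss or throughput-loss event, $\mathcal{B}$'s $U_j$ hits the same type of event within the coupling (it is always weakly behind, so it decodes its required packet no sooner), giving $Z_n^{\mathcal{B}} \ge Z_n^{\mathcal{A}}$ and $Y_n^{\mathcal{B}}\ge Y_n^{\mathcal{A}}$ pathwise up to the single initial slot. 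Dividing by $X_n\to\infty$ and passing to the limit via Definitions~\ref{defn:thpt_loss} and~\ref{defn:order_loss}, the one-slot discrepancy vanishes, so $\TL_j^{\mathcal{B}} \le \TL_j^{\mathcal{A}}$ and $\OL_j^{\mathcal{B}}+\TL_j^{\mathcal{B}} \le \OL_j^{\mathcal{A}}+\TL_j^{\mathcal{A}}$; combined with the identical behaviour of all other users and the linear relations~\eqref{eqn:lin_trans_1}--\eqref{eqn:lin_trans_2}, this is exactly the claimed trade-off improvement. The main obstacle I anticipate is making the forward coupling rigorous when the scheme's future decisions depend on feedback: one must verify that ``the same scheme'' applied to the two coupled histories either makes the same transmission or one that preserves the invariant, which requires the structural lemmas (Claims~\ref{clm:reqd_pkts_only} and~\ref{clm:inst_dec_only} applied inductively) so that the scheme never ``needs'' the extra packet $\mathcal{A}$'s $U_j$ has --- handling the bookkeeping of which packet indices are ``required'' as the chains drift apart by one step is the delicate part.
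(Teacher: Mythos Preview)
Your case analysis for $U_j$ contradicts the hypothesis of the claim. You write that under $\mathcal{A}$, ``in the best case $U_j$ can now peel off $s_{r_j}$ (if it already had all the $s_{r_i}$ \ldots)'' and then build the entire coupling machinery to handle this case where $\mathcal{A}$ looks locally better. But the hypothesis is precisely that $U_j$ has \emph{not} decoded all $s_{r_i}$, $i\in\mathcal{I}$; so at least one unknown $s_{r_i}$ is always present in the combination, and $U_j$ can \emph{never} isolate $s_{r_j}$ from $\mathcal{A}$'s transmission. The case you are worried about simply does not occur, and with it the need for any forward coupling disappears.

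Once the hypothesis is used correctly, the paper's argument is immediate and runs in the opposite direction from yours: under either scheme the received symbol is innovative for $U_j$ and does not decode $s_{r_j}$, so throughput loss and this slot's order loss are identical; the only difference is that under $\mathcal{B}$, if exactly one $s_{r_i}$ was unknown to $U_j$, that packet is decoded now, which can only help (save an order loss) in a later slot. Hence $\mathcal{B}$ weakly dominates $\mathcal{A}$ slot by slot --- there is no ``one-slot discrepancy'' to amortize. Your coupling invariant is also stated in the wrong direction: $U_j$'s decoded set under $\mathcal{B}$ is a \emph{superset} of that under $\mathcal{A}$ (they have the same number of innovative equations, but $\mathcal{B}$ may have resolved one more actual packet), and your chain of inequalities $Z_n^{\mathcal{B}}\ge Z_n^{\mathcal{A}}$ followed by $\TL_j^{\mathcal{B}}\le \TL_j^{\mathcal{A}}$ is internally inconsistent.
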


\begin{proof}[Proof]
If $U_j$ has not decoded all $s_{r_i}$ for $ i \in \mathcal{I}$, the combination is innovative but does not help decoding an in-order packet, irrespective of whether $s_{r_j}$ is included in the combination. However, if we do not include packet $s_{r_j}$, $U_j$ may be able to decode one of the packets $s_{r_i}$, $i \in \mathcal{I}$, which can save it from an order loss in a future slot. Hence excluding $s_{r_j}$ gives a better throughput-smoothness trade-off.
\end{proof}

For the two user case, Claims~\ref{clm:reqd_pkts_only} and~\ref{clm:inst_dec_only} imply the following code structure.

\begin{prop}[Code Structure for the Two User Case]
\label{prop:two_user_code_struct}
Every achievable throughput-smoothness trade-off can be obtained by a coding scheme where the source transmits $s_{r_1}$, $s_{r_2}$ or the exclusive-or, $s_{r_1} \xor s_{r_2}$ in each slot. It transmits $s_{r_1} \xor s_{r_2}$ only if $r_1 \neq r_2$, and $U_1$ has decoded $s_{r_2}$ or $U_2$ has decoded $s_{r_1}$.
\end{prop}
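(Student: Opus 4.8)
The plan is to derive the code structure by specializing Claims~\ref{clm:reqd_pkts_only} and~\ref{clm:inst_dec_only} to $K=2$ and disposing of the handful of remaining cases directly. By Claim~\ref{clm:reqd_pkts_only}, in each slot it suffices to transmit a linear combination of $\{s_{r_i} : i \in \mathcal{I}\}$ for some $\mathcal{I} \subseteq \{1,2\}$, so only four cases arise. The case $\mathcal{I} = \emptyset$ (an idle slot) is weakly dominated: transmitting $s_{r_1}$ instead can only enlarge $U_1$'s decoded set and, for $U_2$, is at worst a throughput-loss slot, hence never worse than staying idle. The cases $\mathcal{I} = \{1\}$ and $\mathcal{I} = \{2\}$ are exactly the transmissions $s_{r_1}$ and $s_{r_2}$, leaving only $\mathcal{I} = \{1,2\}$, i.e.\ a combination $\alpha s_{r_1} + \beta s_{r_2}$ with $\alpha\beta \neq 0$.

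I would next reduce this remaining case to the XOR and show it is used only when $r_1 \neq r_2$. If $r_1 = r_2$, the combination is a scalar multiple of $s_{r_1}$ and collapses to the $\mathcal{I}=\{1\}$ case, so a genuine two-packet combination arises only for $r_1 \neq r_2$. For $r_1 \neq r_2$ I would argue that the choice of field coefficients is immaterial: for two distinct packets every $\alpha s_{r_1} + \beta s_{r_2}$ with $\alpha\beta \neq 0$ has exactly the same per-slot decodability at each user as $s_{r_1}\xor s_{r_2}$ --- a user recovers $s_{r_1}$ from it iff it has already decoded $s_{r_2}$, and symmetrically, and otherwise it obtains one fresh innovative equation. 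Since $\TL_i$ and $\OL_i$, and hence $\thpt_i$ and $\smthness_i$ through \eqref{eqn:lin_trans_1}--\eqref{eqn:lin_trans_2}, depend only on this decodability and on how the decoded sets evolve, we may assume the source sends $s_{r_1}\xor s_{r_2}$. This establishes the first sentence.

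To pin down when the XOR is used, I would apply Claim~\ref{clm:inst_dec_only}. Suppose $r_1 \neq r_2$ and neither ``$U_1$ has decoded $s_{r_2}$'' nor ``$U_2$ has decoded $s_{r_1}$'' holds; in particular $U_2$ has not decoded $s_{r_1}$. Reading $s_{r_1}\xor s_{r_2}$ as the combination that already contains $s_{r_1}$ (so $\mathcal{I}=\{1\}$) with $s_{r_2}$ added ($j=2$), Claim~\ref{clm:inst_dec_only} says the scheme that drops $s_{r_2}$ --- transmitting $s_{r_1}$ alone --- attains a better throughput--smoothness trade-off. Hence the XOR is never needed in such a slot, so whenever the source does send $s_{r_1}\xor s_{r_2}$ we must have $r_1 \neq r_2$ and ($U_1$ has decoded $s_{r_2}$ or $U_2$ has decoded $s_{r_1}$), which is the second sentence. (Applying Claim~\ref{clm:inst_dec_only} symmetrically with $\mathcal{I}=\{2\}$, $j=1$ would even sharpen ``or'' to ``and''; and since $r_1=r_2$ makes the disjunction vacuously false, the hypothesis $r_1 \neq r_2$ is in fact redundant.)

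The step I expect to be the main obstacle is the coefficient-independence reduction: Claims~\ref{clm:reqd_pkts_only}--\ref{clm:inst_dec_only} constrain which packets occur in a combination but say nothing about their coefficients, so one must check carefully that replacing every two-packet combination by the corresponding XOR, throughout the whole run and with the source's feedback-driven decisions adapting accordingly, leaves each user's decoded-set trajectory --- and therefore $\TL_i$ and $\OL_i$ --- unchanged in all respects that matter. A secondary, routine point is that Claims~\ref{clm:reqd_pkts_only} and~\ref{clm:inst_dec_only} compare \emph{entire} schemes, so the conclusion should be read as ``every achievable throughput--smoothness pair is attained by some scheme of the stated form,'' which is exactly what the proposition asserts.
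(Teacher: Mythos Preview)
Your proposal is correct and follows the same route the paper takes: the paper simply states that Proposition~\ref{prop:two_user_code_struct} is implied by Claims~\ref{clm:reqd_pkts_only} and~\ref{clm:inst_dec_only} and gives no further argument. Your write-up is a careful unpacking of exactly that implication---enumerating the subsets $\mathcal{I}\subseteq\{1,2\}$, disposing of $\mathcal{I}=\emptyset$ and $r_1=r_2$, observing that any nonzero combination of two packets is decodability-equivalent to the XOR, and then invoking Claim~\ref{clm:inst_dec_only} to restrict when the XOR is used---so there is no substantive difference in approach, only in the level of detail supplied.
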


In the rest of the paper we analyze the two user case and focus on coding schemes as given by Proposition~\ref{prop:two_user_code_struct}.
%

%
%
\begin{figure}[t]
\begin{center}
\begin{tabular}{ |c|c|c|c|}
  \hline
   Time & Sent &  $U_1$ & $U_2$\\
  \hline
     1 	&  $s_1$	& 	$s_1$ & \xmark \\ 
     2 	&  $s_2$	& 	\xmark & $s_2$ \\ 
     3 	&  $s_1 \xor s_2$	& 	$s_2$ & $s_1$ \\
     4 	&  $s_3$	& $s_3$ &\xmark\\
     5 	&  $s_4$	& $s_4$ &$s_4$ \\
   \hline
\end{tabular}
\caption{Illustration of the optimal coding scheme when the source always give priority to user $U_1$. The third and fourth columns show the packets decoded at the two users. Cross marks indicate erased slots for the corresponding user.\label{fig:fixed_primary_eg}}
\end{center}
\vspace{-0.65cm}
\end{figure}

\section{Optimal Performance for One of the Users}
\label{sec:fixed_primary}
%
%
In this section we consider that the source always gives priority to one user, called the primary user. We determine the best achievable throughput-smoothness trade-off for a secondary user that is ``piggybacking" on such a primary user.

\subsection{Coding Scheme}
Without loss of generality, suppose that $U_1$ is the primary user, and $U_2$ is the secondary user. Recall that ensuring optimal performance for $U_1$ implies achieving $(\thpt_1, \smthness_1) = (p_1, p_1)$, which is equivalent to $(\TL_1, \OL_1) = (0,0)$. While ensuring this, the best throughput-smoothness trade-off for user $U_2$ is achieved by the coding scheme given by Claim~\ref{clm:fixed_prim} below.  

\begin{clm}[Optimal Coding Scheme]
\label{clm:fixed_prim}
A coding scheme where the source transmits $s_{r_1} \xor s_{r_2}$ if $r_1 > r_2$ and $U_2$ has already decoded $s_{r_1}$, and otherwise transmits $s_{r_1}$, gives the best achievable $(\thpt_2, \smthness_2)$ trade-off while ensuring optimal $(\thpt_1, \smthness_1)$.
\end{clm}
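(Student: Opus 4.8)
The plan is to first use the $U_1$-optimality requirement to collapse the design space to a single binary choice per slot, and then to establish, by a coupling over the erasure realizations, that the choice made in Claim~\ref{clm:fixed_prim} is weakly optimal for $U_2$.

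First I would extract the structural consequences of ``ensuring optimal $(\thpt_1,\smthness_1)$''. By \eqref{eqn:lin_trans_1}--\eqref{eqn:lin_trans_2} this is equivalent to $(\TL_1,\OL_1)=(0,0)$, i.e.\ on every $U_1$-unerased slot the transmitted combination must be innovative for $U_1$ and must let $U_1$ decode $s_{r_1}$. An induction on slots then shows that $U_1$'s decoded set is always exactly the prefix $\{s_1,\dots,s_{r_1-1}\}$, so $r_1$ advances by one on each $U_1$-unerased slot and $U_1$ knows $s_{r_2}$ iff $r_2<r_1$. Combined with Proposition~\ref{prop:two_user_code_struct}, an admissible scheme can therefore transmit only $s_{r_1}$ or $s_{r_1}\xor s_{r_2}$ (the option $s_{r_2}$ is $U_1$-optimal only when $r_1=r_2$, where it coincides with $s_{r_1}$), and $s_{r_1}\xor s_{r_2}$ is admissible only when $r_1>r_2$: if $r_1<r_2$ then $U_1$ cannot cancel $s_{r_2}$, an order loss for $U_1$, and if $r_1=r_2$ it is degenerate. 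Hence when $r_1\le r_2$ the transmission is forced to $s_{r_1}$; in particular on any $U_2$-unerased slot with $r_1<r_2$, $U_2$ has already decoded $s_{r_1}$ and unavoidably incurs a throughput loss. The only freedom is on slots with $r_1>r_2$: transmit $s_{r_1}$ or $s_{r_1}\xor s_{r_2}$. If moreover $U_2$ has decoded $s_{r_1}$, then $s_{r_1}\xor s_{r_2}$ lets $U_2$ decode $s_{r_2}$ (an in-order burst) while $s_{r_1}$ is non-innovative and yields no progress, so $s_{r_1}\xor s_{r_2}$ strictly dominates --- exactly what Claim~\ref{clm:fixed_prim} does. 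The only substantive case left is $r_1>r_2$ with $U_2$ not knowing $s_{r_1}$: both candidates are innovative for $U_2$ but decode no in-order packet and so incur the same loss in the current slot, and the claimed scheme sends $s_{r_1}$.

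The main obstacle is to show that in this last case $s_{r_1}$ is no worse than $s_{r_1}\xor s_{r_2}$ for \emph{every} continuation. I would prove the stronger statement that, with both users' erasure sequences held fixed, for any admissible scheme $\Sigma$, writing $\tilde r_2(n)$ for the required-packet index of $U_2$ under $\Sigma$ and $r_2(n)$ for that under the scheme $\Sigma^{\ast}$ of Claim~\ref{clm:fixed_prim}, one has $r_2(n)\ge \tilde r_2(n)$ for all $n$ and $\Sigma^{\ast}$ delivers at least as many in-order bursts by slot $n$ as $\Sigma$; by $\thpt_2=\lim_n r_2(n)/n$, Definition~\ref{defn:smoothness}, and \eqref{eqn:lin_trans_1}--\eqref{eqn:lin_trans_2}, this gives that $\Sigma^{\ast}$ maximizes both $\thpt_2$ and $\smthness_2$, proving the claim. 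The induction must carry a strengthened hypothesis on $U_2$'s linear knowledge: after projecting out $U_2$'s decoded prefix, every combination $U_2$ holds under $\Sigma$, restricted to coordinates $\ge r_2(n)$, lies in the span of the combinations $U_2$ holds under $\Sigma^{\ast}$, so that any reception letting $\Sigma$'s $U_2$ make in-order progress also lets $\Sigma^{\ast}$'s $U_2$ do so. The point of the choice $s_{r_1}$ is that an individual packet is ``no less reusable'' than the combination $s_{r_1}\xor s_{r_2}$: once $U_2$ holds $s_{r_1}$, the first later slot at which $r_1$ has not advanced yields $s_{r_2}$ by a single xor under $\Sigma^{\ast}$, whereas a held $s_{r_1}\xor s_{r_2}$ resolves $s_{r_2}$ only after $U_2$ separately relearns that same $s_{r_1}$ --- which under the adaptive $\Sigma^{\ast}$ happens on exactly that slot. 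Verifying that this invariant survives every transition --- across the cases of whether $r_1$ advances, whether $U_2$'s slot is erased, and whether $r_2$ later overtakes $r_1$ --- is the technical heart; the passage to $(\thpt_2,\smthness_2)$ is then immediate from the definitions.
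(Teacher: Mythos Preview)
Your reduction of the design space---using $(\TL_1,\OL_1)=(0,0)$ to force $s_{r_1}$ into every combination, then invoking Proposition~\ref{prop:two_user_code_struct} to leave only the binary choice $s_{r_1}$ vs.\ $s_{r_1}\xor s_{r_2}$ when $r_1>r_2$---matches the paper. The paper's own proof, however, stops right there: it is two sentences that simply cite Proposition~\ref{prop:two_user_code_struct} (and, implicitly, Claim~\ref{clm:inst_dec_only}) to conclude that xoring exactly when $U_2$ already knows $s_{r_1}$, and sending $s_{r_1}$ otherwise, is best for $U_2$. The case you single out as ``the main obstacle''---$r_1>r_2$ with $U_2$ not knowing $s_{r_1}$---is disposed of in the paper only through the heuristic argument of Claim~\ref{clm:inst_dec_only}, which says excluding $s_{r_j}$ ``may'' let $U_j$ decode a useful individual packet and so ``can save it from an order loss in a future slot.'' Your samplewise coupling with the span-containment invariant is a genuinely different and more rigorous route: it turns that heuristic into a pathwise domination of $r_2(n)$ and of the in-order burst count, from which optimality of both $\thpt_2$ and $\smthness_2$ follows directly via Definitions~\ref{defn:thpt}--\ref{defn:smoothness}. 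What the paper's approach buys is brevity by leaning on the structural Claims~\ref{clm:reqd_pkts_only}--\ref{clm:inst_dec_only}; what yours buys is an argument that does not depend on those claims being established beyond their informal justifications, at the cost of the transition-by-transition case analysis you flag.
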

\begin{proof}
Since $U_1$ is the primary user, the source must include its required packet $s_{r_1}$ in every coded combination. By Proposition~\ref{prop:two_user_code_struct}, if the source transmits $s_{r_1} \xor s_{r_2}$ if $U_2$ has already decoded $s_{r_1}$, and transmits $s_{r_1}$ otherwise, we get the best achievable throughput-smoothness trade-off for $U_2$.
\end{proof}
Fig.~\ref{fig:fixed_primary_eg} illustrates this scheme for one channel realization. 

\subsection{Markov Model of Packet Decoding}
\label{subsec:markov_model_fixed_prim}
Packet decoding at the two users with the scheme given by Claim~\ref{clm:fixed_prim} can be modeled by the Markov chain shown in Fig.~\ref{fig:fixed_primary_markov}. The state index $i$ can be expressed in terms of the number of gaps in decoding of the users, defined as follows. 

\begin{defn}[Number of Gaps in Decoding]
The number of gaps in $U_i$'s decoding is the number of undecoded packets of $U_i$ with indices less than $r_{\max} = \max_i r_i$. 
\end{defn}

In other words, the number of gaps is the amount by which a user $U_i$ lags behind the user that is leading the in-order packet decoding. The state index $i$, for $i \geq -1$ is equal to the number of gaps in decoding at $U_2$, minus that for $U_1$. Since the source gives priority to $U_1$, it always has zero gaps in decoding, except when there is a $c = p_2(1-p_1)$ probability erasure in state $0$, which causes the system goes to state $-1$. 
The states $i'$ for $i\geq 1$ are called ``advantage" states and are defined as follows.
\begin{defn}[Advantage State]
\label{defn:adv_state}
 The system is in an advantage state when $r_1 \neq r_2$, and $U_2$ has decoded $s_{r_1}$ but $U_1$ has not. 
\end{defn}
 \begin{figure}[t]
\centering
\includegraphics[width=2.5in]{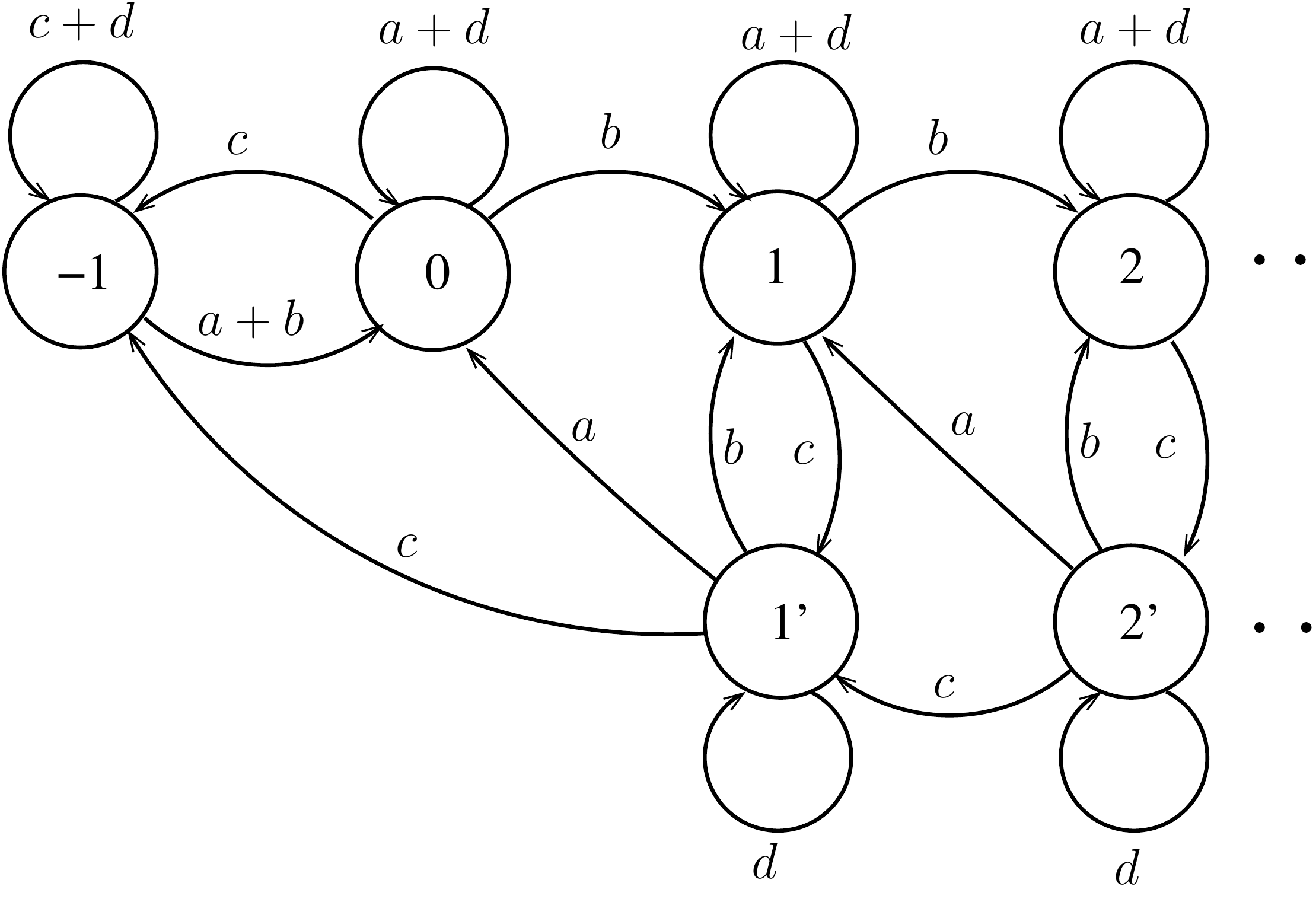}
\caption{Markov chain model of packet decoding with the coding scheme given by Claim~\ref{clm:fixed_prim}, where $U_1$ is the primary user. The state index $i$ represents the number of gaps in decoding of $U_2$ minus that for $U_1$. The states $i'$ are the advantage states where $U_2$ gets a chance to decode its required packet.\label{fig:fixed_primary_markov}}
\vspace{-0.6cm}
\end{figure}
By Claim~\ref{clm:fixed_prim}, the source transmits $s_{r_1} \xor s_{r_2}$ when the system is in an advantage state $i'$, and it transmits $s_{r_1}$ when the system is in state $i$ for $i \geq -1$. We now describe the state transitions of this Markov chain. First observe that with probability $d= (1-p_1)(1-p_2)$, both users experience erasures and the system transitions from any state to itself. When the system is in state $-1$, the source transmits $s_{r_1}$. Since $s_{r_1}$ has been already decoded by $U_2$, the probability $c=p_2(1-p_1)$ erasure also keeps the system in the same state. If the channel is successful for $U_1$, which occurs with probability $p_1 = a+b$, it fills its decoding gap and the system goes to state $0$.

The source transmits $s_{r_1}$ in any state $i$, $i \geq 1$. With probability $a = p_1 p_2$, both users decode $s_{r_1}$, and hence the state index $i$ remains the same. With probability $b = p_1(1-p_2)$, $U_1$ receives $s_{r_1}$ but $U_2$ does not, causing a transition to state $i+1$. With probability $c=(1-p_1) p_2$, $U_2$ receives $s_{r_1}$ and $U_1$ experiences an erasure due to which the system moves to the advantage state $i'$. When the system is an advantage state, having decoded $s_{r_1}$ gives $U_2$ an advantage because it can use $s_{r_1} \xor s_{r_2}$ transmitted in the next slot to decode $s_{r_2}$. From state $i'$, with probability $a$, $U_1$ decodes $s_{r_1}$ and $U_2$ decodes $s_{r_2}$, and the state transitions to $i-1$. With probability $c$, $U_2$ decodes $s_{r_2}$, but $U_1$ does not decode $s_{r_1}$. Thus, the system goes to state $(i-1)'$, except when $i=1$, where it goes to state $0$.

We now solve for the steady-state distribution of this Markov chain. Let $\pi_i$ and $\pi'_i$ be the steady-state probabilities of states $i$ for $i \geq -1$ and advantages states $i'$ for all $i \geq 0$ respectively. The steady-state transition equations are given by
\begin{align}
(1-a-d) \pi_i  &=b (\pi_{i-1} + \pi'_i) + a \pi'_{i+1} \quad \text{ for } i \geq 1 \label{eqn:steady_state_4},\\
(1-d) \pi'_i &= c( \pi_i + \pi'_{i+1}) \quad \quad \quad \quad \,\,\,\, \text{ for } i \geq 1 \label{eqn:steady_state_5}, \\
(1-c-d) \pi_{-1} &= c( \pi_0+ \pi'_{1}) \label{eqn:steady_state_1},\\
(1-a-d) \pi_0 &=a \pi'_1 + (a+b)\pi_{-1} \label{eqn:steady_state_2}.
\end{align}

By rearranging the terms in \eqref{eqn:steady_state_4}-\eqref{eqn:steady_state_2}, we get the following recurrence relation,
\begin{equation}
\pi_i =\frac{(1-a-d)}{c} \pi_{i-1} - \frac{b}{c}\pi_{i-2}  \quad \text{ for } i \geq 2.
\label{eqn:pi_recurrence}
\end{equation}
Solving the recurrence in \eqref{eqn:pi_recurrence} and simplifying \eqref{eqn:steady_state_4}-\eqref{eqn:steady_state_2} further, we can express $\pi_i$, $\pi'_i$ for $i \geq 2$ in terms of $\pi_1$ as follows,
\begin{align}
\frac{\pi_i}{\pi_{i-1}}&= \frac{b}{c},  \label{eqn:steady_state_recursion_1}\\
\frac{\pi'_i}{\pi_i} &=\frac{c}{a+c}. \label{eqn:steady_state_recursion_2}
\end{align}
From \eqref{eqn:steady_state_recursion_1} we see that the Markov chain will be positive-recurrent and a unique steady-state distribution exists only if $b<c$, which is equivalent to $p_1 < p_2$. The expressions of the steady-state probabilities are not given here due to space limitations. If $p_1 \geq p_2$, the expected recurrence time to state $0$, that is the time taken for $U_2$ to catch up with $U_1$ is infinity. 

\subsection{Throughput-smoothness Trade-off for the Secondary User}
\begin{figure}[t]
\centering
\includegraphics[width=3.5in]{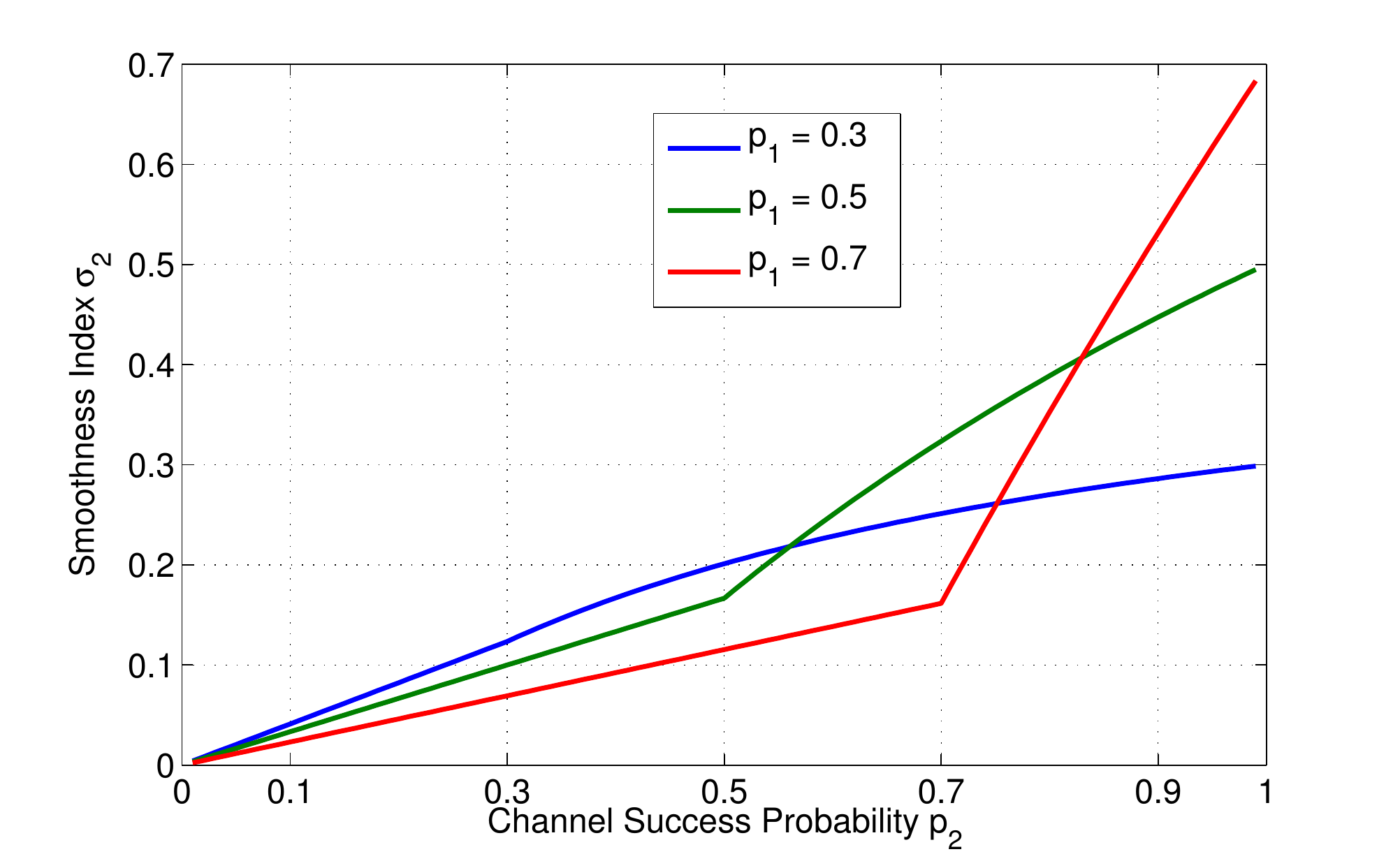}
\caption{Plot of the smoothness index $\smthness_2$ versus channel success probability $p_2$ of $U_2$ for different values of $p_1$. The increase in $\smthness_2$ with $p_2$ is sharper in the regime $p_2 >p_1$.\label{fig:thpt_in_order_fixed_markov}}
\vspace{-0.3cm}
\end{figure}
Since we always give priority to the primary user $U_1$, we have $(\TL_1, \OL_1) = (0,0)$. When $p_1 < p_2$, we can express the throughput loss $\TL_2$ and order loss $\OL_2$ in terms of the steady state probabilities of the Markov chain in Fig~\ref{fig:fixed_primary_markov}. User $U_2$ experiences an order loss when the system is in a state $i$, for $i \geq 1$ and the next slot is successful. And it experiences a throughput loss when it is in state $-1$ and the next slot is successful. By Definition~\ref{defn:thpt_loss} and Definition~\ref{defn:order_loss}, $\TL_2$ and $\OL_2$ are normalized by the success probability $p_2$. Thus, when $p_2 > p_1 $,
\begin{align}
(\TL_2, \OL_2) &= \left(\pi_{-1}, \sum_{i=1}^{\infty} \pi_i \right), \\
&= \left( \frac{c-b}{a+c},  \frac{b(a+b)}{c(1-d)} \right), \\
&= \left(1 - \frac{p_1}{p_2} , \frac{p_1^2(1-p_2)}{p_2(1-p_1)(p_1+ p_2-p_1 p_2)} \right). \label{eqn:T2_O2_stable}
\end{align}
If $p_1 > p_2$, the system drifts infinitely to the right side and hence it is in state $i$ or $i'$ for $ i \geq 1$ with probability $1$. Using \eqref{eqn:steady_state_recursion_2}, we can show that the intermediate quantities are
\begin{align}
(\TL_2, \OL_2) &= \left(0,  \frac{a+c}{a+2c}\right) = \left( 0 , \frac{1}{2-p_1} \right). \label{eqn:T2_O2_unstable}
\end{align}
Both \eqref{eqn:T2_O2_stable} and \eqref{eqn:T2_O2_unstable} converge to the same values when $p_1 = p_2$. 

Using \eqref{eqn:lin_trans_1} and \eqref{eqn:lin_trans_2} we can express $\thpt_2$ and $\smthness_2$ in terms of the intermediate quantities as follows,
\begin{align}
\thpt_2 &= \min(p_1, p_2) \label{eqn:thpt_2_fixed_prim} \\
\smthness_2 &=\begin{cases} 
 \frac{p_2(1-p_1)}{2-p_1} &\text{ for } p_2 < p_1, \\
 \frac{p_1( p_2(1-p_1)-p_1^2(1-p_2))}{(1-p_1)(p_1+ p_2-p_1 p_2)}& \text{ for } p_2 \geq p_1 .
\end{cases}
\end{align} 
Fig.~\ref{fig:thpt_in_order_fixed_markov} shows the smoothness index $\smthness_2$ as a function of channel success probability $p_2$ of the secondary user, for different values of $p_1$. We observe that the increase of $\sigma_2$ with $p_2$ is much faster when $p_2>p_1$. However, when $p_2 >p_1$, the throughput saturates at $p_1$ as given by \eqref{eqn:thpt_2_fixed_prim}. This implies that secondary user piggybacking on a fixed primary user achieves good smoothness without much throughput loss when $p_2$ is slightly more than $p_1$.

For $K >2$ users with a fixed priority order, we can find lower bounds on the throughput and smoothness index of a user. This can be done by fusing users with higher priority into a super user, thus reducing it to the two user case.

\section{Throughput Optimality for Both Users}
\label{sec:greedy_scheme}
We now consider the case where the source wants to ensure throughput optimality to both users, and we determine the best achievable smoothness indices of the users.
\begin{figure}[t]
\begin{center}
\begin{tabular}{ |c|c|c|c|}
  \hline
   Time & Sent &  $U_1$ & $U_2$\\
  \hline
     1 	&  $s_1$	& 	$s_1$ & \xmark \\ 
     2 	&  $s_2$	& 	\xmark & $s_2$ \\
     3 	&  $s_1 \xor s_2$	& \xmark &$s_1$\\
     4 	&  $s_3$	& $s_3$ & \xmark \\
     5 	&  $s_2 \xor s_3$	& $s_2$ &$s_3$ \\
   \hline
\end{tabular}
\caption{Illustration of the greedy coding scheme in Definition~\ref{defn:thpt_greedy_coding}. The third and fourth columns show the packets decoded by the two users. Cross marks indicate erased slots for the corresponding user.\label{fig:greedy_eg}}
\end{center}
\vspace{-0.5cm}
\end{figure}

\subsection{Greedy Coding Scheme}

Let $r_{\text{max}} = \max(r_1, r_2)$ and $r_{\text{min}} = \min (r_1 ,r_2)$, where $r_1$ and $r_2$ are the indices of the required packets of the two users. We refer to the user(s) with the higher index $r_i$ as the leader(s) and the other user as the lagger. Thus, $U_1$ is the leader and $U_2$ is the lagger when $r_1 > r_2$, and both are leaders with $r_1 = r_2$. 

\begin{defn}[Greedy Coding]
\label{defn:thpt_greedy_coding}
In greedy coding, the source transmits $s_{r_{\text{max}}} \xor s_{r_{\text{min}}}$ when the lagger has decoded $s_{r_{\text{max}}}$ but the leader has not, and transmits $s_{r_{\text{max}}}$ otherwise.
\end{defn}

Fig.~\ref{fig:greedy_eg} illustrates the greedy coding scheme and the sequence of packets decoded by the two users.

\begin{clm}[Optimality of Greedy Coding]
\label{clm:opt_greedy}
The greedy coding scheme in Definition~\ref{defn:thpt_greedy_coding} gives the best smoothness indices $\smthness_1$ and $\smthness_2$ while ensuring throughput optimality to both users.
\end{clm}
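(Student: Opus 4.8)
The plan is to show two things: first, that any throughput-optimal scheme must have the code structure of greedy coding, and second, that among throughput-optimal schemes greedy coding minimizes the order loss (equivalently maximizes smoothness) of each user simultaneously. By \eqref{eqn:lin_trans_1}, throughput optimality for $U_i$ means $\TL_i = 0$, i.e.\ every unerased slot delivers an innovative combination to $U_i$. By Proposition~\ref{prop:two_user_code_struct} the source sends $s_{r_1}$, $s_{r_2}$, or $s_{r_1}\xor s_{r_2}$. I would first argue that $\TL_1 = \TL_2 = 0$ forces the transmitted combination, in every slot, to be innovative for \emph{both} users whenever the slot is unerased for that user. When $r_1 = r_2$, sending $s_{r_{\max}}$ is innovative for both, and it is the only one of the three options that is (since $s_{r_1}\xor s_{r_2} = 0$ and each single packet is the same). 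When $r_1 \neq r_2$, say $r_1 > r_2$ so $U_2$ is the lagger: sending $s_{r_{\min}} = s_{r_2}$ is not innovative for $U_1$ (it already has $s_{r_2}$), so it is ruled out; sending $s_{r_{\max}}$ is innovative for $U_1$ but not for $U_2$ unless $U_2$ has not yet... wait, $U_2$ is the lagger so $s_{r_{\max}}$ is innovative for $U_2$ too — but then it is wasted for in-order purposes; and sending $s_{r_{\max}}\xor s_{r_{\min}}$ is innovative for both and, crucially, lets the lagger decode $s_{r_{\max}}$ only if the lagger already has $s_{r_{\min}}$... I need to be careful: the XOR is innovative for the lagger always (it involves the lagger's undecoded $s_{r_{\min}}$), and it is innovative for the leader iff the leader has already decoded $s_{r_{\min}}$. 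So to guarantee $\TL_{\text{leader}} = 0$, the XOR may be sent only when the leader has $s_{r_{\min}}$; when the leader does not have $s_{r_{\min}}$, the only throughput-optimal choice is $s_{r_{\max}}$. This is exactly the greedy rule, modulo identifying the lagger's decoded set. I would lay this case analysis out cleanly, invoking Proposition~\ref{prop:two_user_code_struct} and Claims~\ref{clm:reqd_pkts_only}--\ref{clm:inst_dec_only} to restrict attention to the three candidate combinations.

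Second, I would show greedy coding is optimal for smoothness among the throughput-optimal schemes. Having pinned down that throughput-optimality leaves a choice only in the configuration where $r_1 > r_2$ and the leader $U_1$ has already decoded $s_{r_{\min}} = s_{r_2}$ (and the symmetric configuration): there one may send either $s_{r_{\max}}$ or $s_{r_{\max}}\xor s_{r_{\min}}$, both innovative for both users. Greedy sends the XOR precisely when additionally the lagger has decoded $s_{r_{\max}}$; I claim this is the unique smoothness-optimal choice. The argument is a pointwise-domination / coupling argument: compare the greedy scheme with any alternative throughput-optimal scheme on the same channel realization. The only slots where they can differ are these "choice" slots. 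I would show that at such a slot, the greedy action weakly dominates in the following sense — the set of (packet-index, user) pairs decoded is a superset, or more precisely the "gap vector" (number of missing packets below $r_{\max}$ for each user, à la the Markov model in Section~\ref{sec:fixed_primary}) is coordinatewise no larger under greedy. This is essentially an appeal to Claims~\ref{clm:reqd_pkts_only} and~\ref{clm:inst_dec_only}: Claim~\ref{clm:inst_dec_only} says that when the combination already helps one user in-order and cannot help the other in-order, you should not include the other's required packet — here, including $s_{r_{\min}}$ in the XOR when the lagger cannot yet decode $s_{r_{\max}}$ only hurts, because it denies the lagger a chance to fill an earlier gap. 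Conversely, when the lagger \emph{can} decode $s_{r_{\max}}$, the XOR lets it do so in-order (reducing its gap), strictly better than plain $s_{r_{\max}}$ which is innovative but out-of-order for the leader case... I'd need to track this through the gap dynamics and conclude a stochastic dominance of the gap process, hence of the order-loss rate, hence of $\smthness_i$, for \emph{both} $i$ simultaneously.

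Concretely the steps, in order: (1) translate throughput-optimality to $\TL_1=\TL_2=0$ via \eqref{eqn:lin_trans_1}; (2) by Proposition~\ref{prop:two_user_code_struct}, enumerate the at-most-three admissible combinations in each of the configurations $r_1=r_2$, $r_1>r_2$, $r_1<r_2$, and determine which are innovative for which user; (3) conclude that throughput-optimality uniquely determines the action except in the "leader-has-the-lagger's-packet" configuration, where two actions remain; (4) for that configuration, invoke Claim~\ref{clm:inst_dec_only} (and its mirror) to argue that the greedy choice — XOR iff the lagger already has $s_{r_{\max}}$ — weakly reduces each user's gap relative to the alternative; (5) set up a coupling over channel realizations and show the gap vector under greedy is coordinatewise stochastically dominated by that under any other throughput-optimal scheme; (6) conclude $\OL_i$ is minimized for each $i$, so by \eqref{eqn:lin_trans_2} $\smthness_i$ is maximized for each $i$, simultaneously.

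The main obstacle I anticipate is step (5): making the coupling and the coordinatewise domination of the gap vector rigorous. It is not a priori obvious that one can improve both users' smoothness at once rather than trading one against the other — a priori there could be a Pareto frontier. The content of the claim is that there is no such trade-off here, and the reason is structural: the greedy action never includes a packet that cannot be immediately used in-order by the user who would need it (Claim~\ref{clm:inst_dec_only}), so it never "sacrifices" one user for the other. I would need to verify that the two configurations (leader $= U_1$ vs.\ leader $= U_2$) and the self-loops interact cleanly, i.e.\ that the domination is preserved under the transition in every erasure pattern $a,b,c,d$, which is a finite but slightly delicate case check; the recursion structure \eqref{eqn:pi_recurrence}--\eqref{eqn:steady_state_recursion_2} from the fixed-primary analysis suggests the bookkeeping is tractable. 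A secondary subtlety is well-definedness of the limits in Definitions~\ref{defn:thpt}--\ref{defn:smoothness} when the chain is not positive recurrent (the $p_1 \ge p_2$ regime), but since we are proving an optimality statement about $\smthness_i$, the domination argument carries over to the finite-$n$ averages before taking limits, so this does not obstruct the proof.
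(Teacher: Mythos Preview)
Your approach is essentially the paper's, but the paper's proof is two sentences: throughput optimality forces $s_{r_{\max}}$ to appear in every transmitted combination (otherwise the leader would receive a redundant packet), and then Proposition~\ref{prop:two_user_code_struct} dictates whether to also include $s_{r_{\min}}$.

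Your case analysis has a confusion worth fixing: when $r_1\neq r_2$, the leader \emph{always} has the lagger's required packet $s_{r_{\min}}$, since $r_{\min}<r_{\max}$ and $r_{\max}$ is by definition the leader's earliest undecoded index. So the configuration you label ``leader-has-the-lagger's-packet'' is not a special case --- it is every slot with $r_1\neq r_2$. The real branch is whether the \emph{lagger} has decoded $s_{r_{\max}}$: if so, sending $s_{r_{\max}}$ alone is redundant for the lagger and throughput optimality \emph{forces} the XOR; if not, both $s_{r_{\max}}$ and the XOR are innovative for both users, and Claim~\ref{clm:inst_dec_only} (which underlies Proposition~\ref{prop:two_user_code_struct}) says sending $s_{r_{\max}}$ alone is weakly better for the lagger and indifferent for the leader. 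That is the entire argument.

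Consequently your step~(5) coupling is unnecessary. Claim~\ref{clm:inst_dec_only} already asserts the slot-by-slot dominance you need --- its proof is precisely that withholding $s_{r_j}$ lets $U_j$ store an individual packet for future in-order use and cannot hurt the other user. There is no Pareto frontier to worry about because the greedy action weakly dominates for each user in every slot; the paper dispatches this by citing Proposition~\ref{prop:two_user_code_struct} and nothing more.
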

\begin{proof}
To ensure throughput optimality, the source must include packet $s_{r_{\text{max}}}$ in the coded combination. By Proposition~\ref{prop:two_user_code_struct}, the coding scheme that includes $s_{r_{\text{min}}}$ in the combination only if the lagger has decoded $s_{r_{\text{max}}}$ but the leader has not, maximizes the smoothness indices of the two users. 
\end{proof} 
%

\subsection{Markov Analysis of Packet Decoding}
Packet decoding with greedy coding can be modeled by the Markov chain shown in Fig.~\ref{fig:greedy_markov}, which is a two-sided version of the Markov chain in Fig.~\ref{fig:fixed_primary_markov}. User $U_1$ is the leader in states $i \geq 1$ and $U_2$ is the leader in states $ i \leq -1$, and both are leaders in state $0$. The system is in the advantage state $i'$ if packet is decoded by the lagger but not the leader. All the transitions for states $i$ and $i'$ for $i \geq 1$ are same as in Fig~\ref{fig:greedy_markov}. The transitions for states $i \leq -1$ are symmetric to $i \geq 1$, with probabilities $b$ and $c$ interchanged. 

For the states $i$, with $i \geq 2$ the steady state recursions are same as \eqref{eqn:steady_state_recursion_1} and \eqref{eqn:steady_state_recursion_2}. Similarly for states $i \leq -2$ we have
\begin{align}
\frac{\pi_{-i}}{\pi_{-i+1}} &=\frac{c}{b},\label{eqn:steady_state_recursion_3}\\
\frac{\pi'_{-i}}{\pi_{-i}} &= \frac{b}{a+b}.\label{eqn:steady_state_recursion_4}
\end{align}

The right hand side of the chain is transient if $b > c$ (equivalent to $p_1 >p_2$), and the left side is transient if $b<c$. Hence, the Markov chain is transient when $b \neq c$, which is equivalent to $p_1 \neq p_2$, and null-recurrent when $p_1 = p_2$. 

\begin{figure}[t]
\centering
\includegraphics[width=3.5in]{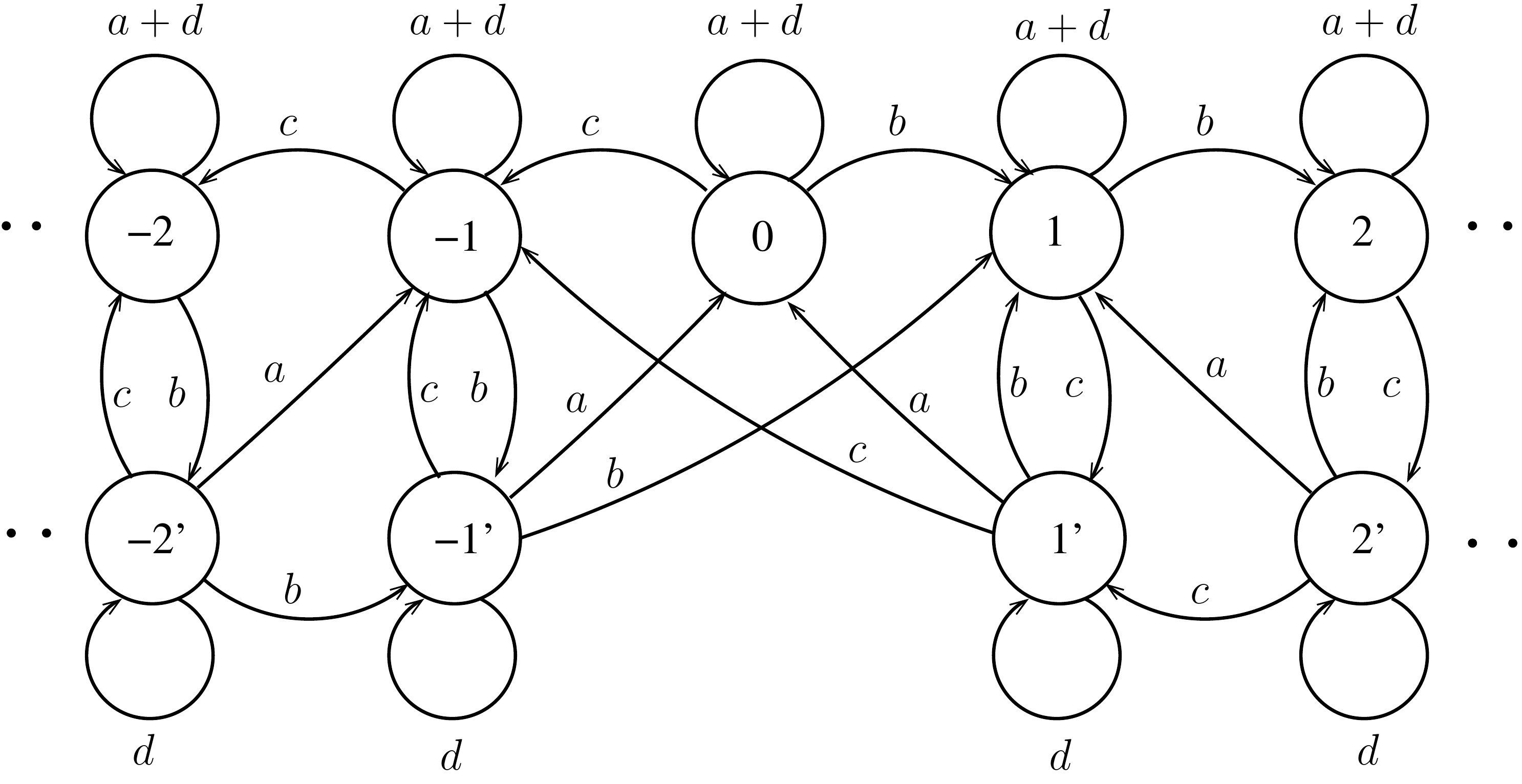}
\caption{Markov chain model of packet decoding with the greedy coding scheme in Definition~\ref{defn:thpt_greedy_coding} that ensures throughput optimality to both users. It is a two-sided version of the Markov chain in Fig.~\ref{fig:fixed_primary_markov}.\label{fig:greedy_markov}}
\vspace{-0.35cm}
\end{figure}

\subsection{Throughput-Smoothness Trade-offs of the two users}
By Claim \ref{clm:opt_greedy} the greedy coding scheme ensures throughput optimality for both users. Hence, $\thpt_1 = p_1$, $\thpt_2 = p_2$ and the throughput losses $\TL_1$ and $\TL_2$ are both zero.

When the right side of chain is transient, that is, if $p_1 > p_2$, 
\begin{align}
\OL_1 &= 0,\label{eqn:greedy_p_1_higher_1}\\
\OL_2 &=\sum_{i=1}^{\infty} \pi_i= \frac{1}{2-p_1}.\label{eqn:greedy_p_1_higher_2}
\end{align}
If $p_2 > p_1$, $\OL_2 = 0$ and $\OL_1$ is same as in \eqref{eqn:greedy_p_1_higher_2} with $p_1$ replaced by $p_2$. When $p_1 = p_2$ the Markov chain is null-recurrent, in which case, the limit in Definition~\ref{defn:order_loss} does not converge, and hence $\OL_1$ and $\OL_2$ are ill-defined.   

Using \eqref{eqn:lin_trans_2}, we can express the smoothness trade-off $\smthness_2$ as,
\begin{align}
\smthness_2 &= \begin{cases}
p_2 &\quad \quad \text{ if } p_1 < p_2, \\
\frac{p_2}{2-p_1}&\quad \quad \text{ if } p_1 > p_2,\\
\text{undefined} & \quad \quad \text{otherwise}.
\end{cases}
\label{eqn:TB_greedy}
\end{align}
The expression $\smthness_1$ is same as \eqref{eqn:TB_greedy} with the probabilities $p_1$ and $p_2$ interchanged everywhere. We can see that the user with the better channel gets the optimal smoothness index, but at the cost of the other user experiencing a much lower smoothness in packet delivery.

\section{General Throughput-Smoothness Trade-offs}
\label{sec:general_trade-off}
For the general case, we propose coding schemes that can be combined to tune the priority give to each user and achieve different points on its throughput-smoothness trade-off. 

\subsection{Proposed Codes}
We can modify the greedy coding scheme in Definition~\ref{defn:thpt_greedy_coding} to get more general schemes called $(N,M)$ multicast codes. 
\begin{defn}[$(N,M)$ Multicast Codes]
\label{defn:N_M_codes}
In the $(N,M)$ multicast code, the source follows the greedy coding scheme in Definition~\ref{defn:thpt_greedy_coding}, except in states $i \geq N$ or $i \leq -M$ for $N, M \geq 1$, of the model in Fig.~\ref{fig:greedy_markov}, where it gives priority to the lagger and transmits $s_{r_{\text{min}}}$.
\end{defn}
\begin{figure}[t]
\centering
\includegraphics[width=3.5in]{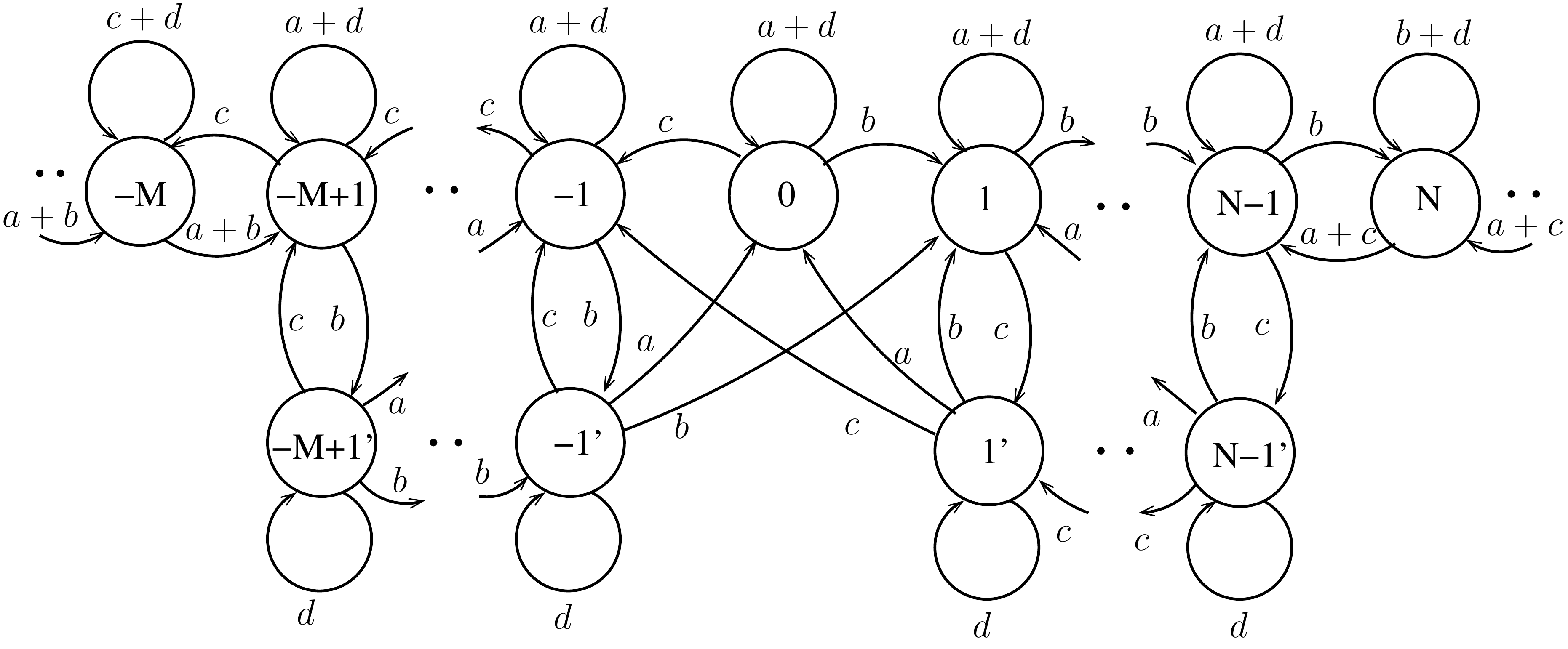}
\caption{Markov chain model of packet decoding with the $(N,M)$ multicast code. In this scheme we give priority to the lagger and transmit its required packet when the system is in state $i \geq N$ or $i \leq  -M$. \label{fig:fixed_prim_M_N_markov}}
\vspace{-0.35cm}
\end{figure}
The packet decoding using the $(N,M)$ multicast code can be modeled by the Markov chain in Fig.~\ref{fig:fixed_prim_M_N_markov}. It is a generalization of the chain in Fig.~\ref{fig:greedy_markov}. By Definition~\ref{defn:N_M_codes}, the source gives priority to the lagger $U_2$ and transmits $s_{r_2}$ when the system is in state $i \geq N$, and it transmits $s_{r_1}$ when in states $i \leq -M$, where $U_1$ is the lagger. Thus, there are backward state transitions with probabilities $p_2 = a+c$ from states $i \geq N$ and forward transitions from states $i \leq -M$ with probability $p_1 = a+b$. All other state transitions are same as the greedy coding scheme in Fig.~\ref{fig:greedy_markov}. There is no change in the coding strategy for the advantage states because by Claim~\ref{clm:inst_dec_only} it is always optimal to transmit $s_{r_1} \xor s_{r_2}$ in these states. 


By varying $N$ and $M$, we can tune the level of priority to each user. For example, we can give higher priority to $U_1$ by decreasing $M$ (or increasing $N$), keeping the other parameter fixed. The coding scheme in Claim~\ref{clm:fixed_prim} is the $(\infty, 1)$ code, the greedy coding scheme in Definition~\ref{defn:thpt_greedy_coding} is the $(\infty, \infty)$ code. 

We can combine different $(N,M)$ codes to get more general coding schemes. 
Suppose the $(N_r, M_r)$ codes, for $r= 1, 2, \cdots, R$ achieve the throughput-smoothness trade-offs $(\thpt_i^{(r)}, \smthness_i^{(r)})$, for $r= 1, 2, \cdots, R$ respectively. Then we can achieve any convex combination of these trade-offs as follows. 

\begin{clm}[Time-sharing $(N,M)$ codes]
\label{clm:combining_M_N_codes}
For every $W$ slots, if the source uses the $(N_r,M_r)$ code for $x_r W$ slots, with $0\leq x_r \leq 1$ and $\sum_{r=1}^{R} x_r = 1$, then for large enough $W$,
\begin{align}
(\thpt_i, \smthness_i) = \left( \sum_{r=1}^{R} x_r \thpt_i^{(r)}, \sum_{r=1}^{R} x_r \smthness_i^{(r)} \right).
\end{align}
\end{clm}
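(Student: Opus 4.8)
The plan is to analyze $\thpt_i$ and $\smthness_i$ directly as long-run time averages and to decompose time according to which code is active. Let $A_i(k)$ be the indicator that $U_i$ receives an innovative combination in slot $k$, and $C_i(k)$ the indicator that a burst of in-order packets is delivered to $U_i$ in slot $k$ (equivalently, that $U_i$ decodes $s_{r_i(k-1)}$ in slot $k$). By the discussion following \eqref{eqn:lin_trans_2}, $\thpt_i=\lim_{n\to\infty}\frac1n\sum_{k=1}^{n}A_i(k)$ and $\smthness_i=\lim_{n\to\infty}\frac1n\sum_{k=1}^{n}C_i(k)$, both in probability, so it suffices to evaluate these two averages under the time-sharing policy. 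Fix the window length $W$, cycle through $(N_1,M_1),\dots,(N_R,M_R)$ in a fixed order using code $r$ for $x_rW$ consecutive slots, and restrict attention to horizons $n=mW$. The $n$ slots split into $m$ windows, and each window into $R$ consecutive \emph{blocks}, the $r$-th of length $x_rW$; inside the $r$-th block the packet-decoding state of Fig.~\ref{fig:fixed_prim_M_N_markov} evolves exactly as it would under code $(N_r,M_r)$ run in isolation and driven by a fresh i.i.d.\ erasure stream, with only the state at the block boundary inherited from the preceding block.

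First I would establish the per-block estimate: for each $r$, running code $(N_r,M_r)$ for $x_rW$ slots starting from \emph{any} state produces, with probability tending to $1$ as $W\to\infty$, a number of innovative-for-$U_i$ slots equal to $x_rW\,\thpt_i^{(r)}+o(W)$ and a number of burst-for-$U_i$ slots equal to $x_rW\,\smthness_i^{(r)}+o(W)$, the $o(W)$ being uniform in the starting state. Granting this, summing the $R$ blocks of a window gives $\bigl(\sum_{r}x_r\thpt_i^{(r)}\bigr)W+o(W)$ innovative slots and $\bigl(\sum_{r}x_r\smthness_i^{(r)}\bigr)W+o(W)$ burst slots per window, the only slots left unaccounted being the $O(R)$ sitting at block and window boundaries, which over $n=mW$ slots contribute at most $O(mR)/n=O(R/W)$ to the length-$n$ per-slot average. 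Summing over the $m$ windows and dividing by $W$ then gives, with probability $\to1$, $\frac1n\sum_{k=1}^{n}A_i(k)=\sum_{r=1}^{R}x_r\thpt_i^{(r)}+O(R/W)+o(1)$, and the identical computation with $C_i$ in place of $A_i$ gives $\smthness_i=\sum_{r}x_r\smthness_i^{(r)}+O(R/W)+o(1)$; since $\sum_r x_r=1$ these are mutually consistent, and letting $W\to\infty$ yields the claim, so that ``for large enough $W$'' the achieved trade-off lies within any prescribed tolerance of the convex combination.

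The main obstacle is the per-block estimate itself, i.e.\ showing that a $\Theta(W)$-length run of code $(N_r,M_r)$ forgets its (arbitrary, possibly far-from-center) inherited state and realizes the same per-slot statistics as that code does in isolation. When $(N_r,M_r)$ has finite parameters the reachable part of its Markov chain is finite and irreducible, hence geometrically ergodic: after an initial prefix whose length is governed by the chain's drift back toward its recurrent region and its mixing time, the chain is within total-variation distance $o(1)$ of its stationary law, whose innovative- and burst-slot fractions equal $\thpt_i^{(r)}$ and $\smthness_i^{(r)}$ by definition of that code's trade-off. When $(N_r,M_r)$ has an infinite parameter the chain is transient or null-recurrent and has no stationary law, but the drift computations already used to derive \eqref{eqn:T2_O2_stable}--\eqref{eqn:TB_greedy} show that the long-run fraction of slots spent in each portion of the state space, and hence the innovative- and burst-slot rates, are independent of where the chain starts, so the same computations pin the block averages down to $x_rW\,\thpt_i^{(r)}+o(W)$ and $x_rW\,\smthness_i^{(r)}+o(W)$. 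Making this $o(W)$ genuinely uniform in the inherited state -- in particular ruling out a block being dominated by an atypically long excursion bequeathed by the previous code -- is the delicate point and is precisely what forces $W$ to be taken large; once it is secured, the summation in the preceding paragraph is routine.
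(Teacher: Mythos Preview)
The paper does not give a proof of this claim at all: immediately after the statement it reads ``The proof is omitted due to space limitations.'' So there is nothing in the paper to compare your argument against; your proposal is in fact more detailed than anything the authors provide.

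On its merits, your window-and-block decomposition is the natural approach and is essentially airtight when every $(N_r,M_r)$ in the mix has finite parameters. In that case the reachable states of the time-shared process are confined to the finite set $\{-\max_r M_r,\dots,\max_r N_r\}$ together with the associated advantage states, so the state inherited at each block boundary lies in a fixed finite set independent of $m$ and $W$; the ``return to the recurrent region plus mixing'' cost you invoke is then $O(1)$ uniformly, and the summation goes through exactly as you describe. This already covers the paper's only actual use of the claim (time-sharing $(1,1)$ with $(N,N)$ for large but finite $N$ in Fig.~\ref{fig:M_N_result_plot}).

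The place where your sketch does not close---and, I would argue, cannot close as written---is the infinite-parameter case. You correctly flag uniformity of the $o(W)$ in the inherited state as the delicate point, but taking $W$ large is not by itself enough: for a \emph{fixed} $W$, an infinite-parameter block can push the state out by $\Theta(W)$, and if the aggregate per-window drift of the time-shared chain is away from the origin, the inherited state grows across windows (linearly in $m$), so eventually a finite-parameter block spends \emph{all} of its $x_rW$ slots drifting back and never samples its own stationary law. Concretely, time-sharing the greedy code $(\infty,\infty)$ with the $(1,1)$ code when $p_1>p_2$ and the greedy weight is large enough produces exactly this behavior, and the resulting $\thpt_1$ does not equal $x_1\thpt_1^{(\infty,\infty)}+x_2\thpt_1^{(1,1)}$. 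This is a limitation of the claim as stated rather than of your method; if you want a complete proof you should either restrict to finite $(N_r,M_r)$, or add a hypothesis ensuring the time-shared chain has a well-defined long-run occupation measure and argue from that.
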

The proof is omitted due to space limitations.
\begin{conj}
\label{conj:rand_policy}
A randomized policy where in every slot, the source uses the $(N_r,M_r)$ code with probability $x_r$, for $r = 1,2,\cdots R$ gives the same $(\thpt_i, \smthness_i)$ trade-off each user $U_i$ as the time-sharing policy in Claim~\ref{clm:combining_M_N_codes}.
\end{conj}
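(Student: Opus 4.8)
The plan is to turn the randomized policy into a time‑homogeneous Markov chain and then compare its stationary behaviour with those of the constituent $(N_r,M_r)$ chains. Because the code index used in slot $n$ is drawn i.i.d.\ with distribution $(x_1,\dots,x_R)$ independently of the decoding history, the decoding state of the model in Fig.~\ref{fig:fixed_prim_M_N_markov} is still Markov under the randomized policy, with one–step kernel $\bar P=\sum_{r=1}^{R}x_r P^{(r)}$, where $P^{(r)}$ is the kernel induced by the $(N_r,M_r)$ code. I would first check that $\bar P$ is positive recurrent in exactly the regime where the time‑sharing trade‑off of Claim~\ref{clm:combining_M_N_codes} is well defined: on the bulk region $-\min_r M_r< i<\min_r N_r$ all the $P^{(r)}$, hence $\bar P$, coincide with the greedy kernel, while outside this region the mixture always retains a strictly inward component, so the chain is bounded in probability and has a unique stationary law $\bar\pi$. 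The ergodic theorem then gives $\TL_i=\langle\bar\pi,\bar\ell_i\rangle$ and $\OL_i=\langle\bar\pi,\bar m_i\rangle$, where $\bar\ell_i=\sum_r x_r\,\ell_i^{(r)}$ and $\bar m_i=\sum_r x_r\,m_i^{(r)}$ are the $x_r$‑mixtures of the per‑state throughput‑loss and order‑loss reward vectors used implicitly in Sections~\ref{sec:fixed_primary}--\ref{sec:greedy_scheme}; by \eqref{eqn:lin_trans_1}--\eqref{eqn:lin_trans_2} it then suffices to prove $\TL_i=\sum_r x_r\,\TL_i^{(r)}$ and $\OL_i=\sum_r x_r\,\OL_i^{(r)}$.

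The difficulty is that $\bar\pi\neq\sum_r x_r\,\pi^{(r)}$ in general, so one cannot simply expand these inner products. What makes progress possible is that $\ell_i^{(r)}$ and $\ell_i^{(r')}$ (and likewise $m_i^{(r)}$ and $m_i^{(r')}$) differ only at the boundary states where the two codes disagree, and that on the $U_1$‑leading side a throughput loss for $U_1$ occurs precisely when the chosen code transmits $s_{r_{\min}}$, i.e.\ precisely in states $i$ with $N_r\le i$. Hence both sides of the desired identity can be rewritten as weighted tail sums $\sum_i\bigl(\sum_{r:N_r\le i}x_r\bigr)\,\pi(\{i\})$, and the claim collapses to the statement that, for every level $\ell\ge\min_r N_r$, the mixed‑chain tail mass $\bar\pi(\{i\ge\ell\})$ equals $\sum_r x_r\,\pi^{(r)}(\{i\ge\ell\})$, together with the symmetric statement on the $U_2$‑leading side.

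I would attempt to establish these tail‑mass identities by a level‑crossing / cut‑balance argument: equate the stationary up‑crossing and down‑crossing rates of level $\ell$ for $\bar P$ and for each $P^{(r)}$, exploiting that the crossing probabilities in the bulk are code‑independent and that above level $\ell$ the mixed inward drift is the $x_r$‑weighted average of the per‑code inward drifts, which is exactly what is needed to make the weighted tail sums match. The main obstacle will be the advantage states: a single transition can move between a non‑advantage state and an advantage state at an adjacent level, so the naive ``cut'' between $\{i\ge\ell\}$ and $\{i<\ell\}$ is not a clean one. Handling this correctly — working with the reduced chain that tracks only $|i|$ together with the advantage flag and writing the balance equations level by level, much as \eqref{eqn:steady_state_4}--\eqref{eqn:steady_state_2} do for a single code — is where the real work lies, and it is the step I would expect to consume most of the proof.

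As a fallback I would keep a direct coupling with Claim~\ref{clm:combining_M_N_codes} in mind: over a window of $W$ slots the randomized policy uses code $r$ in $x_rW+o(W)$ slots almost surely, and one would like to argue that reordering those slots into $R$ contiguous blocks perturbs the counts of innovative receptions and required‑packet decodings by only $o(W)$. This is attractive but delicate, since the state occupancy seen at a ``code‑$r$ slot'' under randomization is not the occupancy seen during a long contiguous code‑$r$ block; for that reason I expect the flux‑balance route above to be the more robust one.
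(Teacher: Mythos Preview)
The statement you are attempting to prove is labelled \emph{Conjecture} in the paper, and the paper offers no proof for it; the authors simply state it, note an implication for randomized per-state schemes, and move on. So there is no argument in the paper to compare your proposal against.

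As for the substance of your plan: the reduction to $\TL_i=\sum_r x_r\,\TL_i^{(r)}$ and $\OL_i=\sum_r x_r\,\OL_i^{(r)}$, and thence to a family of tail-mass identities $\bar\pi(\{i\ge\ell\})=\sum_r x_r\,\pi^{(r)}(\{i\ge\ell\})$, is clean, but the level-crossing step you sketch does not actually deliver that identity. A cut at level $\ell$ yields one balance equation for $\bar P$ and one for each $P^{(r)}$; equating the mixed inward drift with the $x_r$-weighted average of per-code drifts uses only the linearity of $\bar P$ in the $P^{(r)}$, and gives no relation between $\bar\pi$ and the $\pi^{(r)}$ beyond what each chain already satisfies on its own. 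Concretely, balance at the cut says $\bar\pi(\text{just below})\cdot\bar p_{\uparrow}=\bar\pi(\text{just above})\cdot\bar p_{\downarrow}$, and the analogous equations for each $r$; summing the latter with weights $x_r$ does not reproduce the former unless you already know the stationary masses line up, which is the thing to be proved. The advantage states compound this, as you note, but the gap is present even before they enter.

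Your fallback coupling idea runs into the difficulty you yourself flag: the occupancy measure seen at code-$r$ slots under i.i.d.\ randomization is $\bar\pi$, not $\pi^{(r)}$, so reordering slots into blocks changes the per-slot reward in a way that is not obviously $o(W)$. In short, your proposal is an honest outline of where the obstacles lie, but it does not overcome them; the statement remains a conjecture, consistent with how the paper treats it.
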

Conjecture~\ref{conj:rand_policy} implies that the trade-off of a scheme where in state $i$, the source transmits $s_{r_{\text{min}}}$ with probability $q_i$, can be expressed as a combination of the trade-offs of $(N,M)$ codes. 
%
%
%
\subsection{Throughput-Smoothness Trade-offs of the Two Users}
We can solve for the following recursive relations between the steady-state probabilities of the chain in Fig.~\ref{fig:fixed_prim_M_N_markov}. For $N \geq 2$,
\begin{align}
\frac{\pi_{i}}{\pi_{i-1}} &= \begin{cases} \frac{b}{c}&\quad\quad \text{for } 2 \leq i \leq N-2, \\
 \frac{b(1-d)}{c(a+c)} &\quad\quad \text{for } i = N-1,\\
 \frac{b}{(a+c)} &\quad\quad \text{for } i = N, \\
 0 & \quad \quad \text{for } i > N.
\end{cases}
\label{eqn:rho_recursion}
\end{align}
\begin{align}
\frac{\pi'_{i}}{\pi_i} = \begin{cases} 
\frac{c}{a+c}  &\quad\quad \text{ for } 1 \leq i \leq N-2, \\
\frac{c}{1-d} &\quad\quad \text{ for } i = N-1, \\
0 &\quad\quad \text{ for } i \geq N.
\end{cases}
\label{eqn:mu_recursion}
\end{align}
For states $i \leq -1$, the ratios $\pi_{-i}/\pi_{-i+1}$ and $\pi'_{-i}/\pi_{-i}$ are same as given by \eqref{eqn:rho_recursion} and \eqref{eqn:mu_recursion}, but with the probabilities $b$ and $c$ interchanged, and $N$ replaced by $M$. The throughput and order losses can be expressed in terms of $\pi_i$ as
\begin{align}
(\TL_1, \OL_1) &=  \left(\pi_{N} , \sum_{i=1}^{M-1}\pi_{-i} \right), \label{eqn:M_N_T1_O1} \\
(\TL_2, \OL_2) &= \left(\pi_{-M}, \sum_{i=1}^{N-1} \pi_i \right).\label{eqn:M_N_T2_O2}
\end{align}
\begin{figure}
\centering
\includegraphics[width=3.5in]{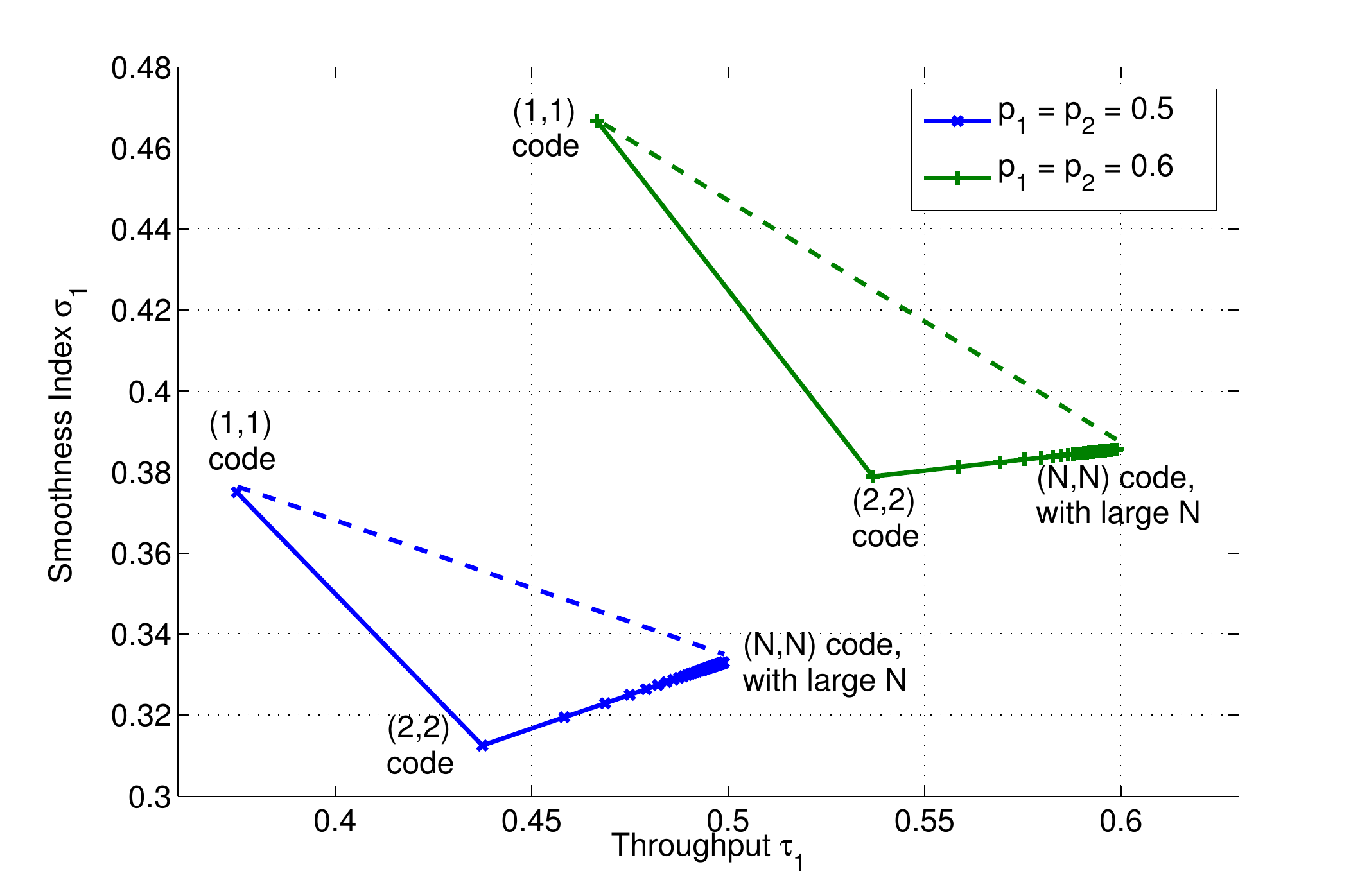}
\caption{Plot the smoothness index $\smthness_1$ versus throughput $\thpt_1$ for $(N,N)$ codes, with $p_1 = p_2$. Due to symmetry $\smthness_2 = \smthness_1$ and $\thpt_2 = \thpt_1$. Time-sharing between the $(1,1)$ code and $(N,N)$ code for large $N$ gives the best trade-off.\label{fig:M_N_result_plot}}
\end{figure}

Using \eqref{eqn:lin_trans_1} and \eqref{eqn:lin_trans_2} we can express the throughput-smoothness trade-off $(\thpt_i, \smthness_i)$ in terms of $(\TL_i, \OL_i)$. From \eqref{eqn:M_N_T1_O1} and \eqref{eqn:M_N_T2_O2} we see that to achieve high smoothness index (proportional to $-(\TL_i+\OL_i)$ ) for one user, one needs to sacrifice on the smoothness index of the other user. 
%
%
In Fig.~\ref{fig:M_N_result_plot} we plot the throughput-smoothness tradeoff $(\thpt_1, \smthness_1)$ for $(N,N)$ codes with $N \geq 1$, and when $p_1 = p_2$. We choose $N=M$ to give equal priority to both users. Since the two sides of the Markov chain are symmetric for these parameters, $\thpt_1 = \thpt_2$ and $\smthness_1 = \smthness_2$. 
Using Claim~\ref{clm:combining_M_N_codes} we can infer that time-sharing between the $(1,1)$ and $(N, N)$ codes for large $N$ gives the best throughput-smoothness trade-off, which is the dashed line joining the end points of each curve in Fig.~\ref{fig:M_N_result_plot}. Further, if Conjecture~\ref{conj:rand_policy} is true, then randomized policy which uses the $(1,1)$ code with probability $q$ for $0 \leq q \leq 1$, and uses the $(N,N)$ code with large $N$ otherwise can also achieve this trade-off. From this we can infer that in a coding strategy which favors the non-leader with probability $q_i$ in state $i$, setting all $q_i$ to the same value $q$ gives the best trade-off.  


%
%
\section{Concluding Remarks}
\label{sec:conclu}

In this work we study the trade-off between the throughput and smoothness in packet delivery when the application requires the packets \emph{in-order}. We use a Markov chain model to analyze the trade-off when the source is multicasting a packet stream to two users over erasure channels with instantaneous feedback. The throughput and smoothness index achieved by a user depends on the priority given to it by the source. By considering the cases of fixed and greedy priority we can show that both users cannot simultaneously achieve optimal throughput and optimal smoothness. We propose general coding schemes that can be used to tune the priority given to each of the users and thus span different points of their throughput-smoothness trade-offs. Future directions include extending this framework to more users, and analyzing second-order delay characteristics such as the exponent of inter-delivery delay.

\bibliographystyle{ieeetr}
\bibliography{streaming}

\end{document}